\newtheorem*{theo*}{Theorem}
\begin{document}

\title{Quantum Estimation Methods for Quantum Illumination}
\author{M. Sanz}
\email{mikel.sanz@ehu.eus}
\affiliation{Department of Physical Chemistry, University of the Basque Country UPV/EHU, Apartado 644, 48080 Bilbao, Spain}
\author{U. Las Heras} 
\affiliation{Department of Physical Chemistry, University of the Basque Country UPV/EHU, Apartado 644, 48080 Bilbao, Spain}
\author{J. J. Garc\'ia-Ripoll} 
\affiliation{Instituto de F\'isica Fundamental IFF-CSIC, Calle Serrano 113b, 28006 Madrid, Spain}
\author{E. Solano}
\affiliation{Department of Physical Chemistry, University of the Basque Country UPV/EHU, Apartado 644, 48080 Bilbao, Spain}
\affiliation{IKERBASQUE, Basque Foundation for Science, Maria Diaz de Haro 3, 48011 Bilbao, Spain}
\author{R. Di Candia}
\email{rob.dicandia@gmail.com}
\affiliation{Department of Physical Chemistry, University of the Basque Country UPV/EHU, Apartado 644, 48080 Bilbao, Spain}
\affiliation{Dahlem Center for Complex Quantum Systems, Freie Universit\"at Berlin, 14195 Berlin, Germany}

\begin{abstract}
Quantum illumination consists in shining quantum light on a target region immersed in a bright thermal bath, with the aim of detecting the presence of a possible low-reflective object. If the signal is entangled with the receiver, then a suitable choice of the measurement offers a gain with respect to the optimal classical protocol employing coherent states. Here, we tackle this detection problem by using quantum estimation techniques to measure the reflectivity parameter of the object, showing an enhancement in the signal-to-noise ratio up to $3$~dB with respect to the classical case when implementing only local measurements. Our approach employs the quantum Fisher information to provide an upper bound for the error probability, supplies the concrete estimator saturating the bound, and extends the quantum illumination protocol to non-Gaussian states. As an example, we show how Schr\"odinger's cat states may be used for quantum illumination.
\end{abstract}

\maketitle

{\it Introduction.---}
Entanglement is a necessary requirement for a number of quantum protocols, including quantum teleportation~\cite{Bennett93,DiCandia15}, superdense coding~\cite{Bennett92}, and quantum computation~\cite{Feynman82,DiVincenzo95}, among others. In his pioneering work~\cite{Lloyd08}, S. Lloyd showed how suitable entangled states can be used to detect the presence of a low-reflectivity object embedded in a bright environment, more efficiently than by using classical resources. This protocol, called quantum illumination (QI), consists of irradiating the target region by using a signal entangled with an ancilla, and optimally measuring the reflected signal together with the ancilla. Surprisingly, even if the final state is not entangled~\cite{Sacchi05}, the initial nonclassical correlations have a positive role in hypothesis testing performances. Lloyd's results, initially limited on a specific background noise, were extended considering a more general noise model~\cite{Shapiro091}. An alternative attempt was given by S.-H. Tan et al.~\cite{Tan08}, in which the authors prove the advantage of an entangled Gaussian state in the QI performance. In this case, a part of a two-mode squeezed state is sent to the target region, while the other copy remains in the lab. Then, an optimal {\it joint} measurement on the copies of the received signal and the ancillary modes generates a gain of at least 6~dB in the error probability decaying rate. Due to technical difficulties, they found only a lower bound for the decaying rate of the optimal error probability, namely Bhattacharyya bound~\cite{Audenaert07, Calsamiglia08}, by using tools specifically developed for Gaussian states~\cite{Pirandola08}. Although this proves the existence of a QI protocol showing certain gain with respect the classical case, the estimator achieving this is highly non-trivial, as it requires the implementation of a quantum Schur transform~\cite{Calsamiglia08}. In this sense, a local protocol is a desiderata, since it is simpler to be experimentally implemented.  Finally, a protocol consisting of {\it separate} measurements of the single copies of the reflected signal and the ancilla was found, showing a more modest 3~dB gain in the low photon regime~\cite{Guha09}. These results paved the way for relevant experimental applications within the purview of quantum radar~\cite{Lopaeva13, Barza15}, quantum communication~\cite{Shapiro09, Zhang13}, and quantum phase estimation~\cite{Zhang15,LasHeras16}, in which the unavoidable noise plays a crucial role.

In this Letter, we show that, fixed the number of photons, an entangled transmitter can improve the optimal estimation of the reflectivity parameter up to 3~dB with respect to a coherent state transmitter in the low-reflectivity limit. The optimal gain is achieved in the low-photon regime, and decays at least as the inverse of signal photon number. This is proven by bounding the quantum Fisher information (QFI) for a family of states labeled by the reflectivity parameter. We relate these results to the QI protocol, discussing a strategy based on the quantum estimation of the reflectivity parameter. We show that the QFI provides  a computable non-trivial upper bound on the optimal error probability, extending the QI protocol to non-Gaussian states. Our results are not limited by the usual QI constraints, since they can be applied to any bath's and signal's number of photons. Furthermore, this approach explicitly provides the concrete estimator attaining the proposed bound. The paper is structured in the following way. Firstly, we introduce the quantum estimation problem and compute the QFI. Then, we discuss our strategy for QI, providing the error probability bounds based on the QFI. Finally, we discuss two examples with Gaussian states and Schr\"odinger's cat states, showing that these states are also useful in QI.

{\it Quantum Estimation.---}
Let us consider a general bipartite pure state representation of the signal-idler system written in the Schmidt decomposition form
\begin{equation}
|\psi\rangle_{SI} = \sum_{\alpha}\sqrt{p_{\alpha}}|w_\alpha\rangle_S | v_\alpha\rangle_I,
\end{equation} 
where $\langle w_\alpha | w_{\alpha'}\rangle=\langle v_\alpha | v_{\alpha'}\rangle=\delta_{\alpha,\alpha'}$. In the QI protocol, the signal modes of the $M$ copies of $|\psi\rangle_{SI}$ are sent to the target region embedded in a bright thermal noise, in which there could possibly be an object. We then receive back $M$ copies of the state $\rho_\eta=\text{Tr}_S\,\left(U_\eta|\psi\rangle_{SI}\langle\psi|\otimes \rho_{B}U_{\eta}^\dag\right)$. Here, $U_\eta=\exp\left[\sin^{-1}(\eta)(s^\dag b-s b^\dag)\right]\simeq \exp\left[\eta(s^\dag b-s b^\dag)\right]$ is the signal-object interaction, modeled as a beamsplitter with amplitude reflectivity $\eta\ll1$, and $\rho_{B}=\sum_{n}\frac{N_B^n}{(1+N_B)^{1+n}}|n\rangle\langle n|$ is a thermal state with mean photon number $N_B$, as depicted in Fig.~\ref{Protocol}. In this framework,  the case $\eta=0$ corresponds to the absence of the object in the target region. In the following, we emphasize the case $N_B\gg1$, corresponding to the typical regime where QI shows a gain with respect the classical case. However, our treatment is completely general and it holds for any value of $N_B$. In order to optimally estimate $\eta$, the quantum Fisher information (QFI)~\cite{Paris09,Toth14} is a paradigmatic tool. This is due to the Cram\'er-Rao bound~\cite{Cramer46}, asserting the limits on the achievable precision of an unbiased estimator $\hat \eta$:
\begin{equation}
\Delta \hat \eta^2 \geq \frac{1}{M H},\label{CR}
\end{equation}
where $H= 2\sum_{mn}\frac{\left|\langle \phi_m|(\partial_\eta\rho_{\eta})|_{\eta=0}|\phi_n\rangle \right|^2}{\lambda_m+\lambda_n}$ is the QFI for the family of states $\rho_\eta$, $\lambda_n$ is the eigenvalue of $\rho_{\eta=0}$ corresponding to the eigenstate $|\phi_n\rangle$, and the derivative is evaluated at $\eta=0$. A large value of the QFI means a high precision in the quantum estimation of the parameter $\eta$, provided that we choose the right measurement. Notice that the optimal value of the QFI is achieved by a pure state, due to its convexity~\cite{Toth14}. A possible estimator saturating Eq.~\eqref{CR} is given by the mean of the $M$ single-copy measurement outcomes of the observable $\hat O=\frac{\hat L}{H}$, where $\hat L=2\sum_{mn}\frac{\langle\phi_m| (\partial_\eta\rho_\eta)|_{\eta=0}|\phi_n\rangle}{\lambda_m+\lambda_n}|\phi_m\rangle\langle \phi_n|$ is the symmetric logarithmic derivative of $\rho_\eta$ computed at $\eta=0$~\cite{Paris09}. This estimator is optimal for evaluating the reflective parameter in the assumed neighborhood of zero.

By using the fact that the derivative computed at $\eta=0$ is given by the trace of the commutator $(\partial_\eta\rho_\eta)|_{\eta=0} = \text{Tr}_S\left[s^\dag b-s b^\dag,|\psi\rangle_{SI}\langle\psi|\otimes \rho_{B}\right]$, and that $\rho_{\eta=0}=\sum_{\alpha}p_\alpha|v_\alpha\rangle\langle v_{\alpha}|\otimes \rho_B$ has a simple diagonal form, we can infer the following general formula for the QFI~\cite{supp}
\begin{equation}\label{Fisher}
H=\frac{4}{1+N_B}\sum_{\alpha\alpha'}\frac{p_{\alpha}p_{\alpha'}}{p_{\alpha'}+p_{\alpha}\frac{N_B}{N_B+1}}\left|\langle w_{\alpha'}|s|w_{\alpha}\rangle\right|^2.
\end{equation}
Equation~\eqref{Fisher} relates the QFI to the Schmidt vectors of the signal, and it allows us to upper bound the maximal achievable precision. First, by implementing the inequality $p_{\alpha'}/\left(p_{\alpha'}+p_{\alpha}\frac{N_B}{N_B+1}\right)\leq1$, and by using the relations $\sum_{\alpha}| w_\alpha\rangle\langle w_\alpha|=\mathbb{I}$ and $N_S=\sum_{\alpha}p_{\alpha}\langle w_{\alpha}| s^\dag s|w_{\alpha}\rangle$, we obtain $H\leq \frac{4N_S}{1+N_B}\equiv H^{(1)}_Q$. Notice that the completeness relation can be assumed by adding zero probability terms in the Schmidt decomposition. The bound $H_Q^{(1)}$ is saturated, for instance, by a two-mode squeezed state in the limit of zero photons (see the examples below). Instead, the inequality between arithmetic and geometric means $\frac{p_\alpha p_{\alpha'}}{p_{\alpha'}+p_{\alpha}\frac{N_B}{N_B+1}}\leq \sqrt{\frac{N_B+1}{N_B}}\frac{p_{\alpha}+p_{\alpha'}}{4}$ yields $H\leq \frac{2N_S+1}{N_B}\equiv H^{(2)}_Q$ by the same argument as above. The bound $H_Q^{(2)}$ is particularly useful for the usual QI situation, in which we have a bright environment $N_B\gg1$. In this case, $H_Q^{(2)}$ is attained by a classical coherent state with large number of signal photons, as we will see soon. In the following we will denote the bound on the QFI by $H_Q\equiv \min\{H_Q^{(1)},H_Q^{(2)}\}$.

\begin{figure}[t!]
\includegraphics[scale=0.70]{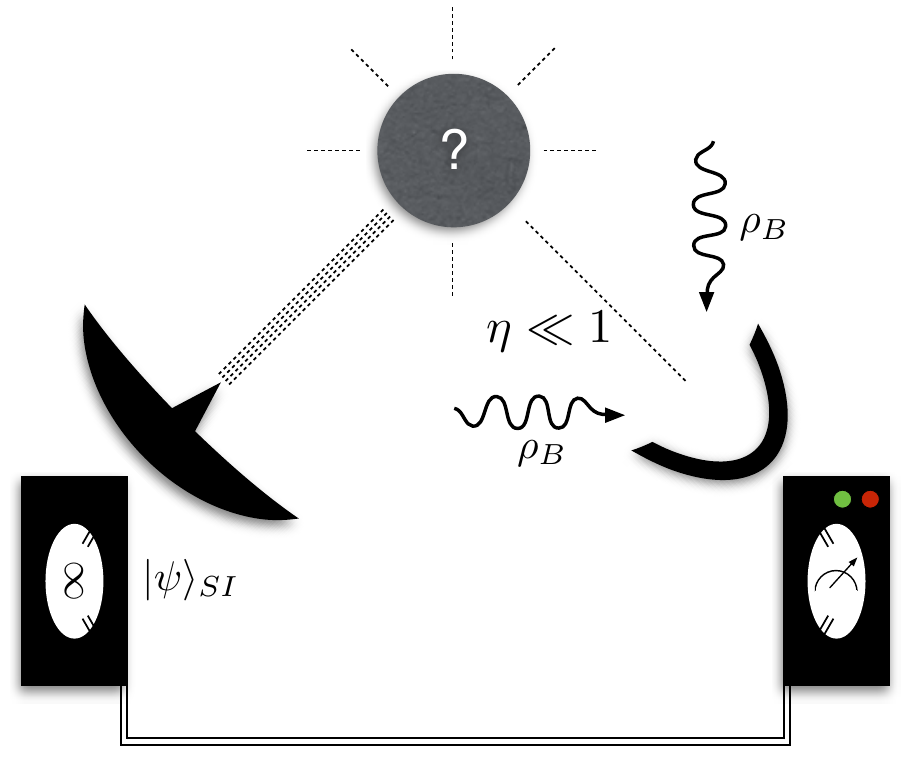}
\caption{Scheme of the quantum illumination protocol. (a) An entangled state, e.g. two-mode squeezed state, is generated in the lab. The idler beam stays in a controlled transmission line while the signal is emitted toward the object we want to detect. Since its reflectivity is small, $\eta\ll1$, most of the light captured by the receiver is thermal noise. By measuring the correlations between the signal and the idler beams, it is possible to detect the presence of an object with a smaller error probability than protocols involving classical light, with a gain up to 3~dB in the error probability exponent.}
\label{Protocol}
\end{figure}

In order to quantify the maximum possible gain achieved by an entangled state, we optimize Eq.~\eqref{Fisher} for non-entangled states. In this case, if $|\psi\rangle_S$ is the quantum state of the  signal, then $H=\frac{4\left|{}_S\langle\psi|s|\psi\rangle_{S}\right|^2}{1+2N_B}$. The last expression is maximal when $|\psi\rangle_S$ is an eigenstate of the annihilation operator, i.e. a coherent state. Therefore, we obtain $H=\frac{4N_S}{1+2N_B}\equiv H_C$, that saturates the bound $H_Q$ on the QFI for large $N_S$. Notice that, in the case when $N_B\ll1$, the QFI upper bound is saturated by a coherent state for any $N_S$, meaning that we can have a gain using nonclassical resources only when the target region is embedded in a thermal bath with a non-zero mean number of photons. From the previous inequalities, we conclude that $H/H_C\leq 2$ for low $N_S$, and $H/H_C\leq 1+\frac{1}{2N_S}$ for large $N_S$. The derived bounds imply that no structured optical device is able to go beyond the 3 dB in the reflectivity estimation problem. As shown in the examples below, this gain is saturated at least by Gaussian states and Schr\"odinger's cat states in the limit of zero signal photons. Lastly, we notice that not all entangled states are useful for estimating the reflectivity parameter. A paradigmatic case is $|\psi\rangle_{SI}=\frac{1}{\sqrt{d}}\sum_{n=0}^{d-1} |n\rangle_S| n \rangle_I$, a maximally entangled state which has the same QFI as the coherent state, as one can straightforwardly check by using Eq.~\eqref{Fisher}.

{\it Quantum Illumination.---}
In the QI protocol, we may consider a strategy based on the evaluation of the parameter $\eta$ with the estimator $\hat \eta=\frac{1}{M}\sum_{i=1}^MO_i$, where $O_i$ are the outcomes of the observable optimizing the QFI. The figure of merit used in Bayesian hypothesis testing is given by the error probability
\begin{equation}
\text{Pr}_{err}=\pi_0\Pr\left(1|H_0\right)+\pi_1\Pr\left(0|H_1\right),
\end{equation}
where $\pi_{0}$ ($\pi_1$) is the a priori probability of the absence (presence) of the object, while $\Pr\left(1|H_0\right)$ ($\Pr\left(0|H_1\right)$) is the probability to have a false positive (false negative), denoted as type I (II) error. First, we remind that $\hat \eta$ is unbiased in a neighborhood of $\eta=0$, having $\text{Tr}\,(\rho_\eta \hat O )=\eta+O(\eta^2)$. It is thus natural to define a test as follows: we declare the presence of the object whether $\hat \eta>\xi\eta$ for some $0<\xi<1$, and its absence otherwise. In this case, we have that $\Pr\left(1|H_0\right)=\Pr\left[\hat \eta >\xi\eta\right|H_0]\equiv P_{\rm I}$ and  $\Pr\left(0|H_1\right)=\Pr\left[\hat \eta -\eta<-(1-\xi)\eta\right|H_1]\equiv P_{\rm II}$. Eventually, one should choose $\xi$ in order to minimize the error probability ${\rm Pr}_{err}$. If we transmit a signal in a coherent state $|\alpha\rangle$, with $\alpha=\sqrt{N_S}e^{i\phi}$, the error probability optimized upon global measurements is given by ${\rm Pr}_{err}^C\sim e^{-\eta^2 N_S (\sqrt{N_B+1}-\sqrt{N_B})^2M}$~\cite{note}. In the $N_B\gg1$ limit, the same decaying rate is reached by the measurement that optimized the QFI, i.e. $\hat O_C=\frac{e^{-i\phi} b+e^{i\phi}b^\dag}{2\sqrt{N_S}}$, which is a quadrature operator up to a normalization factor. In fact, by using that the measured state is Gaussian for any $\eta$, we can deduce the classical type I and II error probabilities:
\begin{align}\label{classer}
P_{\rm I,\rm II}=\frac{1}{2}{\rm erfc}\left(\sqrt{\frac{\eta_{\rm I,II}^2H_CM}{2}}\right)\sim \exp\left(-\frac{ \eta_{\rm I,II}^2H_CM}{2}\right),
\end{align}
where $\eta_{\rm I}=\xi\eta$ and $\eta_{\rm II}=(1-\xi)\eta$. Both types of error decay exponentially with a rate $\eta_{\rm I,II}^2 H_CM/2$, and the optimal decaying rate of the error probability is obtained for $\xi=\frac{1}{2}$. In the $N_B\gg1$ limit, the optimal error probability approximates to ${\rm Pr}_{err}^C \approx e^{-M\eta^2 N_S/4N_B}$, whose decaying rate is the same as the one of $P_{\rm I}$ and $P_{\rm II}$ found in Eq.~\eqref{classer} if we set $\xi=\frac{1}{2}$.  In the following, we will compare a suboptimal error probability for entangled states with the optimal one for coherent states in order to show the quantum enhancement case by case. The aim of using nonclassical resources is to find a better convergence rate for the error probability ${\rm Pr}_{err}$, or to minimize $P_{\rm II}$ by keeping bounded $P_{\rm I}$ (see Refs.~\cite{Spedalieri14, Wilde16} for this analysis). In the following, we show that both types of error decay faster if an entangled state and the optimal measurement given by the QFI are used. This is proven by applying the classical Cram\'er-Chernoff theorem~\cite{Hayashi02} to the distribution of the measurement outcomes. In the following, we consider the maximally entangled states of the form $|\psi\rangle_{SI}=\sum_{n}\sqrt{p_n}|n\rangle_S|v_n\rangle_I$. 

\begin{theo*}[Type I-II error probabilities] Let $|\psi\rangle_{SI}=\sum_{n}\sqrt{p_n}|n\rangle_S|v_n\rangle_I$ be the Schmidt decomposition of the signal-idler state, and denote $\rho_S$ the state of the signal. Then $P_{\rm I,\rm II}\sim \exp\left(-\frac{\eta_{\rm I, II}^2HM}{2}\right)$ provided that $\exists C>0$ s.t. $\langle s^k s^{\dag k}\rangle_{\rho_S}\leq k!C^k$ $\forall k\in\mathbb{N}$.
\end{theo*}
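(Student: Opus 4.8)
The plan is to view the $M$ single-copy outcomes $O_1,\dots,O_M$ of the observable $\hat O=\hat L/H$ as independent, identically distributed real random variables, with law fixed by the spectral measure of $\hat O$ in the received state $\rho_\eta$, and to read off the exponential decay of $P_{\rm I}$ and $P_{\rm II}$ from the classical Cram\'er--Chernoff theorem applied to their empirical mean $\hat\eta=\frac1M\sum_i O_i$. Under $H_0$ the relevant event is the upper deviation $\{\hat\eta>\xi\eta\}$, and under $H_1$ the event $\{\hat\eta-\eta<-(1-\xi)\eta\}$ is the lower deviation $\{\hat\eta<\xi\eta\}$; in both cases Cram\'er's theorem gives $-\frac1M\log P\to I(a)=\sup_t[ta-\Lambda(t)]$, the Legendre transform of the cumulant generating function $\Lambda(t)=\log\langle e^{t\hat O}\rangle$ evaluated under the appropriate hypothesis.

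First I would pin down the low-order statistics of $\hat O$. Since $\hat L$ is the symmetric logarithmic derivative, $\langle\hat L\rangle_{\eta=0}=0$ and $\langle\hat L^2\rangle_{\eta=0}=H$, so $\hat O$ has vanishing mean and variance $1/H$ at $\eta=0$; unbiasedness gives $\langle\hat O\rangle_\eta=\eta+O(\eta^2)$, while the variance varies smoothly, $\mathrm{Var}_\eta[\hat O]=1/H+O(\eta)$. Because the deviations $\eta_{\rm I,II}=O(\eta)$ are small in the assumed neighborhood of $\eta=0$, the saddle point $t^\star$ of the Legendre transform is itself $O(\eta)$, so expanding $\Lambda(t)=\mu t+\tfrac12\sigma^2 t^2+O(t^3)$ yields the Gaussian rate $I(a)=\frac{(a-\mu)^2}{2\sigma^2}+O(\eta^3)$. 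Substituting $\mu_0=0$, $\mu_1=\eta$, and $\sigma^2=1/H$ collapses both rate functions to $\eta_{\rm I,II}^2 H/2$ at leading order, which is exactly the claimed exponent; the $\tfrac12\,\mathrm{erfc}$ prefactor of Eq.~\eqref{classer} then follows from the local central-limit refinement of the same expansion.

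The step I expect to be the real obstacle, and the reason the moment hypothesis is imposed, is the justification that Cram\'er's theorem applies at all: $\hat O$ is an unbounded operator on the continuous-variable Hilbert space, so $\Lambda(t)$ need not be finite near $t=0$. The plan is to bound the moments $\langle\hat O^k\rangle_{\rho_\eta}$ in terms of the antinormally ordered signal moments $\langle s^k s^{\dag k}\rangle_{\rho_S}$, which enter because the received mode carries the reflected signal and, upon tracing the beamsplitter output, high powers of $\hat O$ are expressed through powers of $s$ and $s^\dag$ together with harmless Gaussian thermal contributions of $b$; the antinormal ordering is singled out since it dominates all the reorderings generated by expanding $e^{t\hat O}$. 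The hypothesis $\langle s^k s^{\dag k}\rangle_{\rho_S}\le k!\,C^k$ then propagates to a bound $\langle\hat O^k\rangle\le k!\,R^k$ for some $R>0$, which forces the moment generating series to have a positive radius of convergence and hence makes $\Lambda(t)$ analytic about the origin, supplying precisely the regularity Cram\'er's theorem requires. For the maximally entangled family one checks $\langle s^k s^{\dag k}\rangle_{\rho_S}=\sum_n p_n\frac{(n+k)!}{n!}$, so the condition is met by thermal and more generally Gaussian signal statistics, while it excludes pathological heavy photon-number tails for which the large-deviations rate would detach from the Fisher prediction.
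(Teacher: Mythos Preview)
Your proposal is correct and matches the paper's approach: Cram\'er--Chernoff applied to the i.i.d.\ outcomes of $\hat O=\hat L/H$, with the moment hypothesis guaranteeing finiteness of the MGF on a nonempty interval (the paper obtains $t\in[0,\sqrt{H^2N_B/C})$) via the antinormal-ordering dominance you anticipate, and the small-$t$ expansion $\log M_\eta(t)\simeq t^2/(2H)$ fixing the saddle $t^\star\simeq\eta_{\rm I,II}H$ and hence the rate. The only refinement is mechanistic: the operators $s,s^\dag$ enter the moment bound not through the beamsplitter-traced received state (the signal mode is already gone) but through the structure of $\hat O$ itself, whose coefficients carry the matrix elements $\langle n'|s|n\rangle$ inherited from $\partial_\eta\rho_\eta|_{\eta=0}=\mathrm{Tr}_S[s^\dag b-sb^\dag,\cdot]$; resumming these over intermediate Fock indices via completeness is what reconstructs $\langle s^k s^{\dag k}\rangle_{\rho_S}$ alongside the explicit thermal factor $\langle b^k b^{\dag k}\rangle_{\rho_B}=k!(1+N_B)^k$.
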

\begin{proof}
In the Supplemental Material~\cite{supp} we prove that the moment generating function $M_\eta(t)=\text{Tr}\,\rho_\eta e^{t\left(\hat O-\text{Tr}\,\rho_\eta \hat O\right)}$ is finite in the interval $t\in\left[0,\sqrt{\frac{H^2N_B}{C}}\right)$, provided that $\langle s^k s^{\dag k}\rangle_{\rho_S}\leq k!C^k$. The proof is based on bounding the moments of the outcome distribution in a neighborhood of $\eta=0$. Moreover, the expression $M_\eta(t)=1+\left(\frac{1}{2H}+O(\eta)\right)t^2+O(t^3)$ holds as $t\rightarrow 0$. Now, the classical Cram\'er-Chernoff theorem says us that $-\frac{1}{M}\log P_{\rm I}\sim \sup_t\left(\eta_{\rm I} t-\log M_{\eta=0}(t)\right)$. The supremum is achieved for  $t=t^*\simeq\eta_{\rm I}H$, as $\log M_{\eta=0}(t)\simeq \frac{t^2}{2H}$ for small $t$. Therefore, we have $P_{\rm I}\sim \exp\left(-\frac{\eta_{\rm I}^2HM}{2}\right)$. Similarly, one can show that $P_{\rm II}\sim \exp\left(-\frac{\eta_{\rm II}^2HM}{2}\right)$.
\end{proof}
As a consequence, a gain in the QFI implies the same gain in the exponent of both types of error probability in the $N_B\gg1$ limit. In addition this approach holds also for discriminating between two values of the reflectivity parameter which are small, but different from zero. In this case, the optimal estimator does not change, and the $\eta$ in the error probability exponent should be replaced by the numerical difference of the two values of interest. The possible quantum advantage is kept also in this scenario. Let us note that examples of strategies based on the estimation of the parameter $\eta$ have been investigated both theoretically~\cite{Guha09} and experimentally~\cite{Lopaeva13, Zhang13,Zhang15} for the case of Gaussian states in the limit small signal photons. Here, we have extended the analysis to non-Gaussian quantum states and to any number of photons in the bath and the signal. Additionally, our analysis provides a computable upper bound for the error probability optimized upon all local quantum measurements (local strategies). Finally, a relevant advantage of our approach consists in providing explicitly the estimator which attains the aforementioned bound.

\begin{figure}[t!]
\includegraphics[scale=0.434]{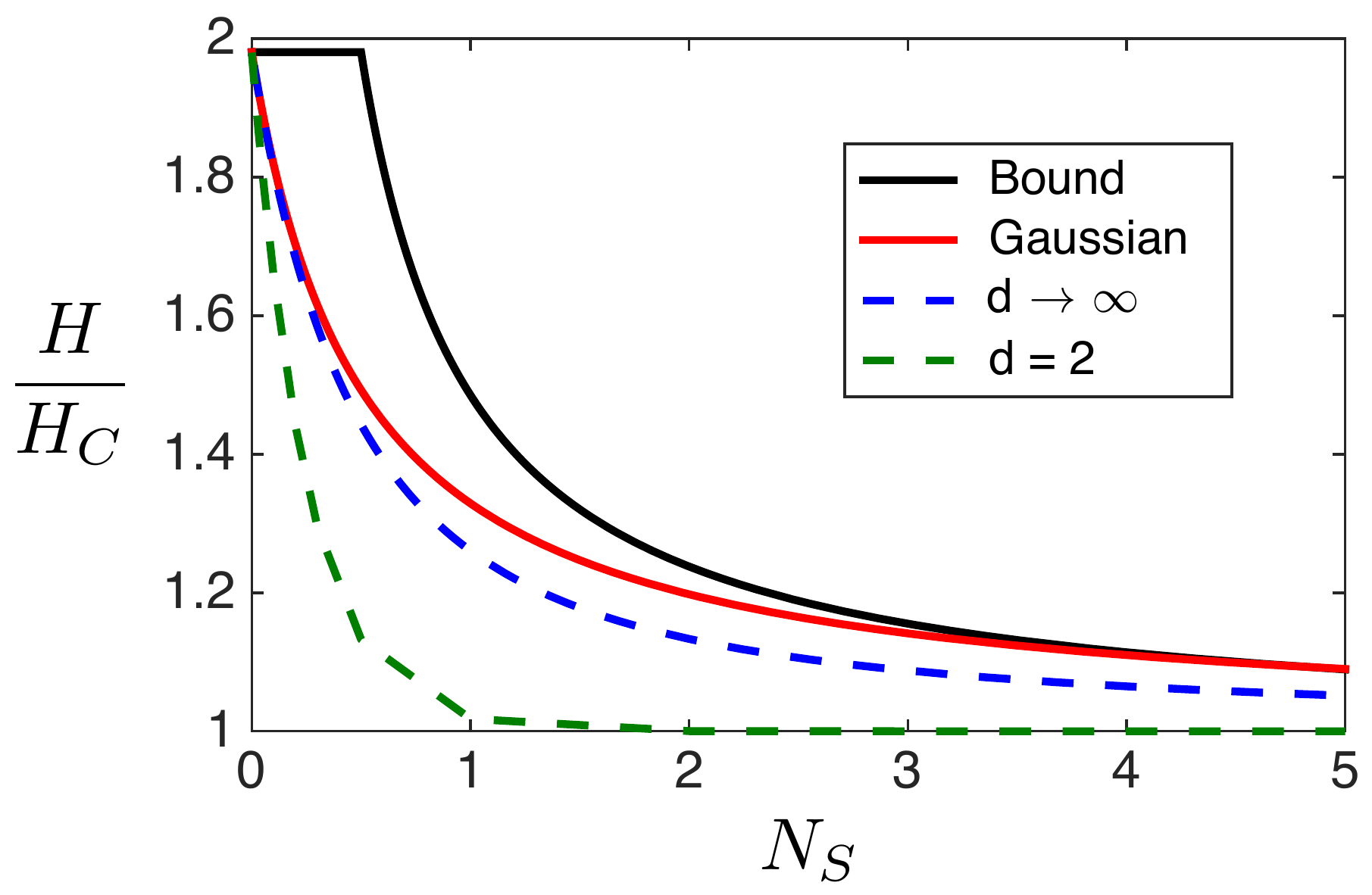}
\caption{We plot the gain in the quantum Fisher information $H$ for the Gaussian state (red line) and Schr\"odinger's cat states (dashed lines), versus the classical case $H_C$ corresponding to a coherent state transmitter. The lines corresponds to the case $N_B=50$. Both Gaussian and Schr\"odinger's cat states achieve the maximum gain in the low photon regime, but the former are sizably more stables. The calculation for the Gaussian state is exact, see Eq.~\eqref{gaussian}, while the QFI for the Schr\"odinger's cat states has been calculated numerically by truncating the Hilbert space of the received signal. This has been done by checking the convergence of the QFI for increasing Hilbert space dimension.}
\label{Curves}
\end{figure}

{\it Examples.---} Let us now illustrate our techniques by introducing a couple of paradigmatic examples achieving the maximum gain for the QI protocol in the low-photon regime, which shows that our upper bound to the optimal error probability is non-trivial. As discussed, it is sufficient to show the gain in the QFI and use the aforementioned estimator.

{\it (i) Gaussian states.---} Regarding the two-mode squeezed state case, the QFI can be analytically computed, due to the easy Schmidt decomposition, i.e. $|\psi\rangle_{SI}=\sum_{n=0}^\infty \sqrt{\frac{N_S^n}{(1+N_S)^{1+n}}}|n\rangle_S|n\rangle_I$. By using Eq.~\eqref{Fisher}, one can find that~\cite{supp}
\begin{equation}\label{gaussian}
H_{\text{Gauss}}=\frac{4N_S}{1+N_B}\frac{1}{1+\frac{N_S}{1+N_S}\frac{N_B}{1+N_B}}.
\end{equation}
Notice that $H_{\text{Gauss}}/H_C\simeq 1+\frac{1}{1+2N_S}$ for large $N_B$, so the gain achieved in this strategy decreases when $N_S$ increases. We also notice that, for $N_B\ll1$, this strategy performs worse than the optimal receiver for coherent states. This fact agrees with the previous bounds, which state that in this regime the QFI of any entangled states cannot be larger than the one corresponding to a coherent state. The optimal observable is $\hat O_{\text{Gauss}}=ab+a^\dag b^\dag$, up to a normalization factor, where $a$ is the idler mode, and $b$ is the incoming signal mode. This measurement can be realized with linear optical elements and photon counting measurements. An optical circuit implementing this measurement has been proposed in Guha et al.~\cite{Guha09} in the limit of small $N_S$.

{\it (ii) Schr\"odinger's cat state.---} In order to compare this result, we consider the Schr\"odinger's cat state $|\psi\rangle_{SI}=\frac{1}{\sqrt{d}}\sum_{k=0}^{d-1}|\alpha_k\rangle_S|w_k\rangle_I$, where $|\alpha_k\rangle$ is a coherent state with amplitude $\alpha_k=\sqrt{N_S}e^{i\frac{2\pi k}{d}}$, $\langle w_{k}| w_{k'}\rangle=\delta_{kk'}$, and $d\geq2$. These states can be generated, for instance, in a circuit QED platform. One may use the results in Ref.~\cite{Vlastakis13} to create a standard Schr\"odinger's cat ($d=2$) in the transmon-resonator system, and then use the fact that a transmon can simultaneously interact with additional resonators~\cite{Wang16}. Regarding the computation of the QFI ($H_{\text{Schr\"o}}$), we need to find the eigenvectors and the eigenvalues of the separable state $\rho_I\otimes \rho_B$. As $\rho_B$ has a simple diagonal form, let us focus on $\rho_I=\frac{1}{d}\sum_{k,k'}\langle \alpha_k| \alpha_{k'}\rangle|w_{k'}\rangle \langle w_{k}|$. We notice that $[\hat T_d, \rho_I]=0$, where $\hat T_d=\sum_{l=0}^{d-1}|w_{l}\rangle\langle w_{l+1}|$, with the convention $|w_{d}\rangle\equiv |w_{0}\rangle$, is the boost operator. It follows that $\rho_I$ has the same eigenvectors of the non-degenerate operator $\hat T_d$, i.e. $|v_k\rangle=\frac{1}{\sqrt{d}}\sum_{l=0}^{d-1}e^{i\frac{2\pi kl}{d}}|w_{l}\rangle$ ($k=0,\dots,d-1$), with corresponding eigenvalues $\lambda_k=\sqrt{d}\ \langle w_{0}| \rho_I |v_k\rangle$. It can be easily shown that $H_{\text{Schr\"o}}/H_C\simeq2$ for low $N_S$~\cite{supp}, while for finite $N_S$ a numerical calculation is needed. This can be done by truncating suitably the Hilbert space of the received signal. One should expect a higher QFI for larger dimension. For this reason, we have considered the limit of  infinite $d$, corresponding to the state $|\psi\rangle_{SI}=\sum_{n=0}^\infty\sqrt{e^{-N_S}\frac{N_S^n}{n!}}|n\rangle_S|z_n\rangle_I$ with $\langle z_{n}|z_{n'}\rangle=\delta_{nn'}$~\cite{supp}, whose form is the one considered in the Theorem. In this case, the state is already written as its Schmidt decomposition, and the QFI is obtained by plugging the probabilities obtained from Schmidt coefficients in Eq.~\eqref{Fisher}. The resulting series converges fast and can be computed efficiently up to an arbitrary small error. The results for $d=2$ and infinite $d$ are depicted in Fig.~\ref{Curves}, showing that these states perform worse than the two-mode squeezed state in the non-zero signal photon regime, but they are essentially the same in the low-photon regime. The optimal operator in this case is rather complicated, but for the case $d=2$ in the low signal photon regime, it corresponds to the measurement in the degenerate Jaynes-Cumming basis, i.e. $\hat O_{\text{Schr\"o}}=\sigma^+ b+\sigma^-b^\dag$ up to a normalization factor. This measurement can be in principle implemented by adapting the ideas of spectroscopy of the Rabi model~\cite{Ciuti, Felicetti} to the Jaynes-Cumming case.

{\it Conclusions.---} We have considered the problem of optimally estimating the reflectivity parameter of an object embedded in an environment. Our analysis shows that, using entangled states as a resource, we can obtain an advantage up to $3$~dB in the QFI with respect the optimal classical strategy. We have applied these results to the QI scenario, providing a non-trivial upper bound on the optimal error probability. This bound depends solely on the QFI of the signal-idler state, which is easily computable, and it allows us to extend the advantage of the QI protocol to a class of non-Gaussian states. Moreover, our results are not limited to a bright environment ($N_B\gg1$) and low signal photons ($N_S\ll1$) cases, but they hold for any number of photons in the bath and the signal. In the examples, we have discussed the Gaussian states and the multilevel Schr\"odinger's cat states, which also performs optimally in the low-photon regime. Indeed, recent technological advances show that Schr\"odinger's cat states can be useful for quantum computation~\cite{Ofek16}, and this makes of them a possible alternative to the Gaussian states in the QI protocol.

The authors acknowledge support from Spanish MINECO/FEDER Grant No. FIS2015-69983-P and No. FIS2015-70856-P, Basque Government Grant IT986-16, and UPV/EHU UFI 11/55 and a PhD grant,  CAM Research Network QUITEMAD+, Consejer\'ia de Educac\'ion, Juventud y Deporte, Comunidad de Madrid (S2013/ICE-2801), and the European project AQuS (Project No. 640800). The authors thank Giuseppe Vitagliano and Iagoba Apellaniz for useful discussions.

\begin{widetext}

\section{Supplemental material for  \\``Quantum Estimation Methods for Quantum Illumination''}

In this supplemental material we give the details of the claims in the main text.

\maketitle

\section{Quantum Fisher Information}

Here, we derive the formula given in the main text for the quantum Fisher information (QFI). Let us consider a general state for the signal-idler system in its Schmidt decomposition form
\begin{align}
|\psi\rangle_{SI} =  \sum_{\alpha}\sqrt{p_\alpha} | w_\alpha\rangle_S | v_\alpha\rangle_I,
\end{align}
where $\langle v_\alpha | v_{\alpha'}\rangle = \langle w_\alpha | w_{\alpha'}\rangle = \delta_{\alpha \alpha'}$. In the following, we will drop the system labels, with the convention that $|w_\alpha (v_\alpha)\rangle$ corresponds to the signal (idler) system, and $|\psi\rangle\equiv|\psi\rangle_{SI}$. We measure the state $\rho_\eta=\text{Tr}_S\,\left(U_\eta|\psi\rangle\langle\psi|\otimes \rho_{B}U_{\eta}^\dag\right)$, where $U_\eta\simeq\exp\left[\eta(s^\dag b-s b^\dag)\right]$. We have that the derivative computed at $\eta=0$ is 
\begin{align}
\partial_\eta\rho_\eta = \text{Tr}_S\left[s^\dag b-s b^\dag,|\psi\rangle\langle\psi|\otimes \rho_{B}\right]=\sum_{\alpha \alpha'}\sqrt{p_{\alpha} p_{\alpha'}}|v_\alpha\rangle \langle v_{\alpha'}|\otimes \left[\langle w_{\alpha'}|s^\dag|w_{\alpha}\rangle b - \langle w_{\alpha'}| s | w_{\alpha}\rangle] b^\dag, \rho_{B}\right].
\end{align}
Moreover, $\rho_{\eta=0}$ has the simple diagonal form $\rho_{\eta=0}=\sum_{\alpha}p_\alpha|v_\alpha\rangle\langle v_{\alpha}|\otimes \rho_B$. The eigenvalue of $\rho_{\eta=0}$ corresponding to the eigenstate $| v_\alpha\rangle |n\rangle$ is $p_\alpha\rho_n$, where $\rho_n = \frac{N_B^n}{(1+N_B)^{1+n}}$ is the corresponding eigenvalue of the thermal state $\rho_B$. The QFI for $\rho_\eta$ computed at $\eta=0$ is thus given by
\begin{align}
H&=2\sum_{\alpha \alpha' n n'}\frac{\left|\langle v_\alpha,n|\sum_{\beta\beta'}\sqrt{p_\beta p_{\beta'}} |v_\beta\rangle \langle v_{\beta'}|\otimes \left[\langle w_{\beta'}|s^\dag|w_{\beta}\rangle b - \langle w_{\beta'}| s | w_{\beta}\rangle] b^\dag, \rho_{th}\right] |v_{\alpha'},n'\rangle\right|^2}{p_\alpha\rho_n+p_{\alpha'}\rho_{n'}} \label{3}\\
\quad&=2\sum_{\alpha \alpha' n n'} \frac{p_{\alpha}p_{\alpha'}\left|\langle w_{\alpha'}|s^\dag|w_{\alpha}\rangle(\rho_{n'}-\rho_n)\sqrt{n+1}\delta_{n',n+1}- \langle w_{\alpha'}|s|w_{\alpha}\rangle(\rho_{n'}-\rho_n)\sqrt{n'+1}\delta_{n,n'+1} \right|^2}{p_\alpha\rho_n+p_{\alpha'}\rho_{n'}} \\
\quad&=2 \sum_{\alpha \alpha' n n'}\frac{p_{\alpha}p_{\alpha'}(\rho_{n'}-\rho_n)^2}{p_\alpha\rho_n+p_{\alpha'}\rho_{n'}}\left(\left|\langle w_{\alpha'}|s^\dag|w_{\alpha}\rangle\right|^2(n+1)\delta_{n',n+1} + \left|\langle w_{\alpha'}|s|w_{\alpha}\rangle\right|^2 (n'+1)\delta_{n,n'+1}\right) \\
\quad&=2\sum_{\alpha \alpha' n} (n+1)p_{\alpha}p_{\alpha'}(\rho_{n+1}-\rho_n)^2 \left(\frac{\left|\langle w_{\alpha'}|s^\dag|w_{\alpha}\rangle\right|^2}{p_\alpha\rho_n+p_{\alpha'}\rho_{n+1}}+\frac{\left|\langle w_{\alpha'}|s|w_{\alpha}\rangle\right|^2}{p_\alpha\rho_{n+1}+p_{\alpha'}\rho_{n}} \right) \\
\quad& = 4 \sum_{\alpha \alpha' n} (n+1)\rho_np_{\alpha}p_{\alpha'}\frac{\left|\langle w_{\alpha'}|s|w_{\alpha}\rangle\right|^2}{p_{\alpha'}+p_{\alpha}\frac{\rho_{n+1}}{\rho_n}}\left(1-\frac{\rho_{n+1}}{\rho_n}\right)^2 \label{7}\\
\quad&= \frac{4}{1+N_B}\sum_{\alpha\alpha'}\frac{p_{\alpha}p_{\alpha'}}{p_{\alpha'}+p_{\alpha}\frac{N_B}{N_B+1}}\left|\langle w_{\alpha'}|s|w_{\alpha}\rangle\right|^2. \label{8}
\end{align}
The equalities from \eqref{3} to \eqref{7} consists in simple algebra and rearrangement of indexes, while from \eqref{7} to \eqref{8} we have used that $\rho_{n+1}/\rho_n=N_B/(1+N_B)$. 

\section{Finiteness of the Moment generating function}
Here, we derive a bound on the moment generating function (MGF) of the distribution of the outcomes for the optimal measurement. The optimal observable is given by $\hat O=\frac{\hat L}{H}$, where $\hat L$ is the symmetric logarithmic derivative of $\rho_\eta$ computed at $\eta=0$~\cite{Paris09}. We have thus
\begin{align}
\hat O&= \frac{2}{H} \sum_{\alpha \alpha' n n'}\frac{\sqrt{p_\alpha p_{\alpha'}}(\rho_{n'}-\rho_n)\left(\langle w_{\alpha'}|s^\dag|w_{\alpha}\rangle\sqrt{n+1}\delta_{n',n+1}- \langle w_{\alpha'}|s|w_{\alpha}\rangle\sqrt{n'+1}\delta_{n,n'+1}\right)}{p_\alpha\rho_n+p_{\alpha'}\rho_{n'}}|v_\alpha, n\rangle\langle v_{\alpha'}, n'| \nonumber\\
\quad&=-\frac{2}{H(1+N_B)}\sum_{\alpha\alpha'}|v_\alpha\rangle\langle v_{\alpha'}|\otimes \left(c_{\alpha\alpha'}^*b+c_{\alpha'\alpha}b^\dag\right),
\end{align}
where we have introduced the quantity $c_{\alpha\alpha'}\equiv \frac{\sqrt{p_\alpha p_{\alpha'}}\langle w_{\alpha}|s|w_{\alpha'}\rangle}{p_{\alpha} +p_{\alpha'}\frac{N_B}{1+N_B}}$. In the following, we shall consider the case  $|w_{\alpha}\rangle\equiv |\alpha\rangle$ for all $\alpha$, i.e. the Schmidt vectors of the signal are Fock states.

The MGF $M_\eta(t)=\text{Tr}\,\rho_\eta e^{t\left(\hat O-\text{Tr}\,\rho_\eta \hat O\right)}$ is defined by the exponential power series $M_\eta (t)=\sum_{k=0}^\infty \frac{t^k\text{Tr}\,\left(\rho_\eta( \hat O - \text{Tr}\, \rho_\eta \hat O)^k\right)}{k!}$. We will prove the convergence for $M_{\eta=0}(t)$ and for $\partial_\eta M_{\eta}(t)|_{\eta=0}$ in order to ensure the convergence of the MGF in a neighborhood of $\eta=0$. 

\subsection{Convergence of $M_{\eta=0}(t)$}
We check the convergence of the exponential series defining $M_{\eta=0}(t)$ by bounding the moments $F_k\equiv\text{Tr}\,\rho_{\eta=0}\hat O^k$. Notice that $F_{2k+1}=0$ straightforwardly, as the non-central moments of the thermal state $\rho_B$ are zero. For the even moments we have
\begin{align}
\left(\frac{H(1+N_B)}{2}\right)^{2k}F_{2k}&= \sum_{\beta \alpha_1\dots \alpha_{2k}\alpha'_1\dots\alpha'_{2k}}p_{\beta}\langle v_{\beta}| v_{\alpha_1}\rangle\langle v_{\alpha'_1}| v_{\alpha_2}\rangle\cdots\langle v_{\alpha'_{2k-1}}| v_{\alpha_{2k}}\rangle \langle v_{\alpha'_{2k}}| v_{\beta}\rangle \nonumber\\
\quad&\quad\quad\quad\quad\quad\quad\times \text{Tr}\,\left(\rho_{B}\left[c_{\alpha_1\alpha'_1}^*b+c_{\alpha'_1\alpha_1}b^\dag\right]\cdots\left[c_{\alpha_{2k}\alpha'_{2k}}^*b+c_{\alpha'_{2k}\alpha_{2k}}b^{\dag}\right]\right) \\
\quad&= \sum_{\alpha_1\dots \alpha_{2k}}p_{\alpha_1}\text{Tr}\left(\rho_{B}\prod_{l=1}^{2k} \left[c_{\alpha_{l}\alpha_{l+1}}^*b+c_{\alpha_{l+1}\alpha_{l}}b^\dag\right]\right) \label{11}
\end{align}
with the notation $\alpha_{2k+1}\equiv\alpha_1$. We have then 
\begin{align}
F_{2k}&\leq \left(\frac{1}{H\sqrt{N_B(1+N_B)}}\right)^{2k}\sum_{\alpha_1\dots \alpha_{2k}}p_{\alpha_1}\text{Tr}\left(\rho_{B}\prod_{l=1}^{2k} \left[\langle \alpha_{l+1}|s^\dag|\alpha_l\rangle b+\langle \alpha_{l+1}|s|\alpha_l\rangle b^\dag\right]\right) \label{12}\\
\quad&=\left(\frac{1}{H\sqrt{N_B(1+N_B)}}\right)^{2k}\sum_{\sigma}\langle\sigma(s,s^\dag)\sigma(b,b^\dag)^\dag\rangle_{\rho_S\otimes \rho_{B}} \label{13}\\
\quad&\leq \left(\frac{1}{H\sqrt{N_B(1+N_B)}}\right)^{2k}{2k \choose k}\langle s^{k} s^{\dag k}\rangle_{\rho_{S}} \langle b^{k} b^{\dag k}\rangle_{\rho_{B}} \label{14}\\
\quad&=\left(\frac{1}{H^2N_B}\right)^{k}\frac{(2k)!}{k!}\langle s^{k} s^{\dag k}\rangle_{\rho_{S}}. \label{15}
\end{align}
Here, from \eqref{11} to \eqref{12} we have used the inequality of arithmetic and geometric means $\frac{\sqrt{p_{\alpha}p_{\alpha'}}}{p_{\alpha}+p_{\alpha'}\frac{N_B}{1+N_B}}\leq\frac{1}{2}\sqrt{\frac{1+N_B}{N_B}}$. From \eqref{12} to \eqref{13} we have used the completeness relation, the fact that non central moments of the thermal state $\rho_B$ are zero, and we have written the result in terms of $\sigma(a,a^\dag)$, denoting a possible way of arranging the product of $2k$ elements taken in the set $\{a,a^\dag\}$.  From \eqref{13} to \eqref{14}  we have used that the expected values of the antinormal ordering of the ${2k \choose k}$ central moments contributing to the sum are larger than using any other ordering, e.g $\langle a a^{\dag}a a^\dag\rangle \leq \langle a^2 a^{\dag 2}\rangle$. From \eqref{14} to \eqref{15}, we have used that $\langle b^{k} b^{\dag k}\rangle_{\rho_{B}}=k!(1+N_B)^k$. 

Next, if $\langle s^k s^{\dag k}\rangle\leq k!C^k$ for some $C>0$, then $M_{\eta=0}(t)$ is finite for $t\in\left[0,\sqrt{\frac{H^2N_B}{C}}\right)$. In fact, $|F_{2k}|\leq \left(\frac{C}{H^2N_B}\right)^k(2k)!$ implies that the series $M_{\eta=0}(t)=\sum_{k=0}^\infty \frac{t^{2k}F_{2k}}{(2k)!}$ is bounded by the geometric series $\sum_{k=0}^{\infty}\left(\frac{Ct^2}{H^2N_B}\right)^k$, that converges for $\frac{Ct^2}{H^2N_B}<1$.

\subsection{Convergence of $\partial_\eta M_{\eta}(t)|_{\eta=0}$}
In this subsection we check the convergence of the series defining the derivative of the MGF:

\begin{equation}
\partial_\eta M_{\eta}(t)|_{\eta=0}=\text{Tr}\,\partial_\eta\rho_\eta|_{\eta=0} e^{t\left(\hat O-\text{Tr}\,\rho_{\eta=0} \hat O\right)}-\text{Tr}\,\rho_{\eta=0} e^{t\left(\hat O-\text{Tr}\,\partial_\eta \rho_\eta|_{\eta=0} \hat O\right)}.
\end{equation}
The convergence for the second part of the former expression follows from the result of the previous subsection. Regarding the first part, we have that $\text{Tr}\, \rho_{\eta=0}\hat O=0$ and 
\begin{align}
\partial_\eta \rho_{\eta}|_{\eta=0}=\sum_{\beta}\sqrt{p_{\beta-1}p_\beta \beta}\left(|v_\beta\rangle\langle v_{\beta-1}|\otimes [b^\dag,\rho_B]-|v_{\beta-1}\rangle\langle v_\beta|\otimes [b,\rho_B]\right)\equiv \rho_1+\rho_2.
\end{align}
In the following, we bound the moments $G_k=\text{Tr}\,\partial_\eta \rho_{\eta=0}|_{\eta=0}\hat O^k=\text{Tr}\, \rho_1\hat O^k+\text{Tr}\,\rho_2\hat O^k\equiv G^1_k+G^2_k$. Notice that $G_{2k}=0$ as the non-central moments of the thermal state $\rho_B$ are zero. Regarding the odd moments, we have
\begin{align}
\left(\frac{H(1+N_B)}{2}\right)^{2k+1}G^1_{2k+1}&= \sum_{\beta \alpha_1\dots \alpha_{2k+1}\alpha'_1\dots\alpha'_{2k+1}}\sqrt{p_{\beta-1}p_{\beta}\beta}\langle v_{\beta-1}| v_{\alpha_1}\rangle\langle v_{\alpha'_1}| v_{\alpha_2}\rangle\cdots\langle v_{\alpha'_{2k}}| v_{\alpha_{2k+1}}\rangle \langle v_{\alpha'_{2k+1}}| v_{\beta}\rangle\nonumber\\
\quad&\quad\quad\quad\quad\quad\times \text{Tr}\left([b^\dag , \rho_{B}]\left[c_{\alpha_1\alpha'_1}^*b+c_{\alpha'_1\alpha_1}b^\dag\right]\cdots\left[c_{\alpha_{2k+1}\alpha'_{2k+1}}^*b+c_{\alpha'_{2k+1}\alpha_{2k+1}}b^{\dag}\right]\right) \\
\quad&= \sum_{\alpha_1\dots \alpha_{2k+1}}\sqrt{p_{\alpha_1}p_{\alpha_1+1}}\,\text{Tr}\left([b^\dag,\rho_{B}]\prod_{l=1}^{2k+1} \left[c_{\alpha_{l}\alpha_{l+1}}^*b+c_{\alpha_{l+1}\alpha_{l}}b^\dag\right]\right), \label{18}
\end{align}
with the notations $c_{\alpha_{2k+1} \alpha_{2k+2}}\equiv \frac{\sqrt{p_{\alpha_{1}+1} p_{\alpha_{2k+1}}}\langle {\alpha_{2k+1}}|ss^\dag| \alpha_{1}\rangle}{p_{\alpha_{2k+1}} +p_{\alpha_1+1}\frac{N_B}{1+N_B}}$ and $c_{\alpha_{2k+2} \alpha_{2k+1}}\equiv \frac{\sqrt{p_{\alpha_{1}+1} p_{\alpha_{2k+1}}}\langle {\alpha_{1}}|s^2| \alpha_{2k+1}\rangle}{p_{\alpha_{1}+1} +p_{\alpha_{2k+1}}\frac{N_B}{1+N_B}}$. Notice that $G^{1}_k$ is real, even though $\rho_1$ is not Hermitian. We have then 
\begin{align}
G^1_{2k+1}&\leq \left(\frac{2}{H(1+N_B)}\right)^{2k+1}\sum_{\alpha_1\dots \alpha_{2k+1}}\sqrt{p_{\alpha_1}p_{\alpha_1+1}}\,\text{Tr}\left(\{b^\dag,\rho_{B}\}\prod_{l=1}^{2k+1} \left[c_{\alpha_{l}\alpha_{l+1}}^*b+c_{\alpha_{l+1}\alpha_{l}}b^\dag\right]\right) \label{19} \\
\quad&\leq \frac{2}{HN_B}\left(\frac{1}{H\sqrt{N_B(1+N_B)}}\right)^{2k} {2k+1 \choose k} \langle s^{k+1}s^{\dag k+1}\rangle_{\rho_S}\langle b^{k+1}b^{\dag k+1}\rangle_{\rho_B} \label{20}\\
\quad& \leq \frac{2(1+N_B)}{HN_B}\left(\frac{1}{H^2N_B}\right)^k\frac{(2k+1)!}{k!}\langle s^{k+1}s^{\dag k+1}\rangle_{\rho_S}. \label{21}
\end{align}
Here, from \eqref{18} to \eqref{19} we have used that the two terms of the commutator contributes positively to $G^1_{2k+1}$. From \eqref{19} to \eqref{20} we have used the inequality of the arithmetic and geometric means  $\frac{\sqrt{p_{\alpha}p_{\alpha'}}}{p_{\alpha}+p_{\alpha'}\frac{N_B}{1+N_B}}\leq\frac{1}{2}\sqrt{\frac{1+N_B}{N_B}}$, the inequality $\left\{\frac{p_{\alpha_1+1}}{p_{\alpha_1+1}+p_{\alpha_{2k+1}}\frac{N_B}{1+N_B}}, \frac{p_{\alpha_1+1}}{p_{\alpha_{2k+1}}+p_{\alpha_1+1}\frac{N_B}{1+N_B}}\right\}\leq \frac{1+N_B}{N_B}$, the completeness relation, and then we have consider the antinormal ordering of the  $2k+1 \choose k $ terms contributing to the sum, in the same way as in the previous subsection. From \eqref{20} to \eqref{21} we have used that $\langle b^{k+1} b^{\dag k+1}\rangle_{\rho_{B}}=(k+1)!(1+N_B)^{k+1}$.

Notice that the same bound holds for $G^2_{2k+1}$, by using the same inequalities.
Next, if $\langle s^{k+1} s^{\dag k+1}\rangle\leq (k+1)!C^{k+1}$ for some $C>0$, then $\partial_\eta M_{\eta=0}(t)|_{\eta=0}$ is finite for $t\in\left[0,\sqrt{\frac{H^2N_B}{C}}\right)$. In fact, $|G_{2k+1}|\leq\frac{4C(1+N_B)}{HN_B}\left(\frac{C}{H^2N_B}\right)^k(2k+1)!(k+1)$ implies that the series $\partial_\eta M_{\eta=0}(t)|_{\eta=0}=\sum_{k=0}^\infty \frac{t^{2k+1}G_{2k+1}}{(2k+1)!}$ is bounded by the series $\frac{4Ct(1+N_B)}{HN_B}\sum_{k=0}^{\infty}(k+1)\left(\frac{Ct^2}{H^2N_B}\right)^k$, that converges for $\frac{Ct^2}{H^2N_B}<1$. 

This is sufficient to prove the convergence in a neighborhood of $\eta=0$, however with the same effort one could prove that the series converges for any $\eta$, by bounding further derivatives of $M_\eta(t)$ with respect $\eta$.
Finally, we have that
\begin{equation}
M_{\eta}(t)= 1+\left(\frac{1}{2H}+O(\eta)\right)t^2+O(t^3)\quad {\rm as}\quad t\rightarrow 0.
\end{equation}
The correction $O(\eta)$ to the coefficient of $t^2$ follows from the Taylor expansion of $M_\eta(t)$ in the neighborhood of $\eta=0$.

\section{Examples}

In this section, we discuss the examples given in the main text, deriving the QFI for the Guassian and the Schr\"odinger's cat states.
\subsection{Gaussian states}

Here we compute the QFI $H_{\rm{Gauss}}$ for the entangled Gaussian states $|\psi\rangle_{SI}=\sum_{n=0}^\infty \sqrt{\frac{N_S^n}{(1+N_S)^{n+1}}}|n\rangle_S | n\rangle_I$. By simply plugging the Schmidt vectors and the corresponding probabilities in QFI formula derived in Eq.~\eqref{8}, using that $|\langle n'| s |n\rangle|^2=n\delta_{n',n-1}$, and recasting, we obtain
\begin{align}
H_{\rm{Gauss}}&=\frac{4}{1+N_B}\frac{1}{1+\frac{N_S}{1+N_S}\frac{N_B}{1+N_B}}\frac{1}{1+N_S}\sum_{n}n\left(\frac{N_S}{1+N_S}\right)^n \nonumber\\
\quad&=\frac{4N_S}{1+N_B}\frac{1}{1+\frac{N_S}{1+N_S}\frac{N_B}{1+N_B}}.
\end{align}

\subsection{Schr\"odinger's cat states}

Here we derive the formula that we have used to compute numerically the QFI for the multilevel Schr\"odinger's cat state $|\psi\rangle_{SI}=\frac{1}{\sqrt{d}}\sum_{k=0}^{d-1}|\alpha_k\rangle_S|w_{k}\rangle_I$. As seen in the main text, the eigenvectors of $\rho_I$ are $|v_k\rangle=\frac{1}{\sqrt{d}}\sum_{l=0}^{d-1}e^{i\frac{2\pi kl}{d}}|w_{l}\rangle$ ($k=0,\dots,d-1$) with corresponding eigenvalues $\lambda_k=\sqrt{d}\ \langle 0| \rho_I |v_k\rangle$. This gives us the spectral decomposition of $\rho_I\otimes \rho_B$, which is the state in the case $\eta=0$. We use the definition of the QFI to achieve the formula
\begin{equation}\label{scro}
H_{\rm{Schro}}=\frac{2N_S}{d^4}\sum_{ll'nn'}\frac{\left|\sum_{r,s}\langle \alpha_r|\alpha_s\rangle e^{i\frac{2\pi (l's-lr)}{d}}(\rho_{n'}-\rho_n)\left[e^{-i\frac{2\pi s}{d}}\sqrt{n'}\delta_{n,n'-1}-e^{i\frac{2\pi r}{d}}\sqrt{n'+1}\delta_{n,n'+1}\right]\right|^2}{\rho_n\lambda_l+\rho_{n'}\lambda_{l'}}.
\end{equation}
Notice that $\langle \alpha_r|\alpha_s\rangle=\exp\left[-N_S\left(1-e^{i2\pi(s-r)/d}\right)\right]$. For low number of photons, i.e. $N_S\ll1$, $\langle \alpha_r|\alpha_s\rangle\simeq 1$ holds, and the expression in Eq.~\eqref{scro} can be readily computed:  $H_S\simeq\frac{4N_S}{1+N_B}$. For finite $N_S$ we have done a numerical computation, by summed on one of the delta indexes in order to reduce the computation times, and by truncating the received signal Hilbert space. Notice that the expression in Eq.~\eqref{scro} is monotonically increasing with this truncation dimension. We have checked, by computing Eq.~\eqref{scro} for increasing truncation dimensions, that the numerically value converges.

In the limit of infinite $d$, we obtain the state $|\psi\rangle_{SI}=\sum_{n=0}^\infty\sqrt{e^{-N_S}\frac{N_S^n}{n!}}|n\rangle_S|z_n\rangle_I$ (for some $|z_n\rangle$ with $\langle z_n|z_{n'}\rangle=\delta_{nn'}$), which is of the form considered in the hypothesis testing case. In fact, in this limit $\rho_S=\frac{1}{2\pi}\int_0^{2\pi}d\phi |\sqrt{N_S}e^{i\phi}\rangle\langle \sqrt{N_S}e^{i\phi}|=e^{-N_S}\sum_{n=0}^\infty \frac{N_S^n}{n!}|n\rangle\langle n|$ has non-degenerate eigenvalues, and $|\psi\rangle_{SI}$ has infinite Schmidt-rank.

\end{widetext}

\end{document}